\newcommand{\removelatexerror}{\let\@latex@error\@gobble}
\newtheorem{thm}{Theorem}
\newtheorem{prop}[thm]{Proposition}
\newcommand*\bigcdot{\mathpalette\bigcdot@{.65}}
\newcommand*\bigcdot@[2]{\mathbin{\vcenter{\hbox{\scalebox{#2}{$\m@th#1\bullet$}}}}}
\def\ttabular{%
	\hbox\bgroup
	\let\\\cr
	\def\rulea{\ifnum\rowc=\@ne \hrule height 1.3pt \fi}
	\def\ruleb{
		\ifnum\rowc=1\hrule height 1.3pt \else
		\ifnum\rowc=6\hrule height \heavyrulewidth 
		\else \hrule height \lightrulewidth\fi\fi}
	\valign\bgroup
	\global\rowc\@ne
	\rulea
	\hbox to 10em{\strut \hfill##\hfill}%
	\ruleb
	&&%
	\global\advance\rowc\@ne
	\hbox to 10em{\strut\hfill##\hfill}%
	\ruleb
	\cr}
\def\endttabular{%
	\crcr\egroup\egroup}
\begin{document}

\title{Dynamic imaging using a deep generative SToRM (Gen-SToRM) model}

\author{Qing Zou, Abdul Haseeb Ahmed, Prashant Nagpal, Stanley Kruger, Mathews Jacob
\thanks{Qing Zou is with the Applied Mathematics and Computational Sciences (AMCS) program at the University of Iowa, Iowa City, USA (e-mail: zou-qing@uiowa.edu). Abdul Haseeb Ahmed and Mathews Jacob are with the Department of Electrical and Computer Engineering, University of Iowa, Iowa City, USA (e-mail: abdul-ahmed@uiowa.edu and mathews-jacob@uiowa.edu). Prashant Nagpal and Stanley Kruger are with the Department of Radiology, University of Iowa, Iowa City, USA (e-mail: prashant-nagpal@uiowa.edu and stanley-kruger@uiowa.edu). This work is supported by NIH under Grants R01EB019961 and R01AG067078-01A1. This work was conducted on an MRI instrument funded by 1S10OD025025-01.}}

\maketitle

\begin{abstract}
We introduce a generative smoothness regularization on manifolds (SToRM) model for the recovery of dynamic image data from highly undersampled measurements. The model assumes that the images in the dataset are non-linear mappings of low-dimensional latent vectors. We use the deep convolutional neural network (CNN) to represent the non-linear transformation. The parameters of the generator as well as the low-dimensional latent vectors are jointly estimated only from the undersampled measurements. This approach is different from traditional CNN approaches that require extensive fully sampled training data. We penalize the norm of the gradients of the non-linear mapping to constrain the manifold to be smooth, while temporal gradients of the latent vectors are penalized to obtain a smoothly varying time-series. The proposed scheme brings in the spatial regularization provided by the convolutional network. The main benefit of the proposed scheme is the improvement in image quality and the orders-of-magnitude reduction in memory demand compared to traditional manifold models. To minimize the computational complexity of the algorithm, we introduce an efficient progressive training-in-time approach and an approximate cost function. These approaches speed up the image reconstructions and offers better reconstruction performance.

\end{abstract}

\begin{IEEEkeywords}
Generative model; CNN; Manifold approach; Unsupervised learning, Deep image prior
\end{IEEEkeywords}

\IEEEpeerreviewmaketitle

\section{Introduction}

\IEEEPARstart{T}{he} imaging of time-varying objects at high spatial and temporal resolution is key to several modalities, including MRI and microscopy. A central challenge is the need for high resolution in both space and time \cite{tsao2003k,lingala2011accelerated}. Several computational imaging strategies have been introduced in MRI to improve the resolution, especially in the context of free-breathing and ungated cardiac MRI. A popular approach pursued by several groups is self-gating, where cardiac and respiratory information is obtained from central k-space regions (navigators) using bandpass filtering or clustering \cite{feng2014golden,feng2016xd,christodoulou2018magnetic,prieto2015highly,bustin2020compressed}. The data is then binned to the respective phases and recovered using total variation or other priors. Recently, approaches using smooth manifold regularization have been introduced. These approaches model the images in the time series as points on a high-dimensional manifold \cite{poddar2015dynamic,ahmed2020free,poddar2019manifold,nakarmi2017kernel,nakarmi2018mls}. Manifold regularization algorithms, including the smoothness regularization on manifolds (SToRM) framework \cite{poddar2015dynamic,ahmed2020free,poddar2019manifold}, have shown good performance in several dynamic imaging applications. Since the data is not explicitly binned into specific phases as in the self-gating methods, manifold algorithms are less vulnerable to clustering errors than navigator-based corrections. Despite the benefits, a key challenge with the current manifold methods is the high memory demand. Unlike self-gating methods that only recover specific phases, manifold methods recover the entire time series. The limited memory on current GPUs restricts the number of frames that can be recovered simultaneously, which makes it challenging to extend the model to higher dimensionalities. The high memory demand also makes it difficult to use spatial regularization priors on the images using deep learned models.

Our main focus is to capitalize on the power of deep convolutional neural networks (CNN) to introduce a memory efficient generative or synthesis formulation of SToRM. 
CNN based approaches are now revolutionizing image reconstruction, offering significantly improved image quality and fast image recovery \cite{wang2016perspective,wang2017machine,dardikman2020learned,ye2018deep,jin2017deep,monga2019algorithm,pramanik2020deep}. In the context of MRI, several novel approaches have been introduced \cite{wang2016accelerating, wang2020deepcomplexmri}, including transfer-learning \cite{dar2020transfer}, domain adaptation \cite{han2018deep}, learning-based dynamic MRI \cite{sanchez2020scalable,wang2019dimension,wang2020lantern}, and generative-adversarial models \cite{dar2020prior,dar2019image,yurt2019mustgan}.
Unlike many CNN-based approaches, the proposed scheme does not require pre-training using large amounts of training data. This makes the approach desirable in free-breathing applications, where the acquisition of fully sampled training data is infeasible. We note that the classical SToRM approach can be viewed as an analysis regularization scheme (see Fig. \ref{illus}.(a)). Specifically, a non-linear injective mapping is applied on the images such that the mapped points of the alias-free images lie on a low-dimensional subspace \cite{poddar2019manifold,zou2019sampling,zou2020recovery}. When recovering images from undersampled data, the nuclear norm prior is applied in the transform domain to encourage their non-linear mappings to lie in a subspace. Unfortunately, this analysis approach requires the storage of all the image frames in the time series, which translates to high memory demand. The proposed generative SToRM formulation offers quite significant compression of the data, which can overcome the above challenge. Both the relation between the analysis and synthesis formulations and the relation of the synthesis formulation to neural networks were established in earlier work \cite{zou2020recovery}. 

We assume that the image volumes in the dataset are smooth non-linear functions of a few latent variables, i.e., $\mathbf{x}_t = \mathcal{G}_{\theta}(\mathbf{z}_t)$, where $\mathbf{z}_t$ are the latent vectors in a low-dimensional space. $\mathbf{x}_t$ is the $t$-th generated image frame in the time series. This explicit formulation implies that the image volumes lie on a smooth non-linear manifold in a high-dimensional ambient space (see Fig. \ref{illus}.(b)). The latent variables capture the differences between the images (e.g., cardiac phase, respiratory phase, contrast dynamics, subject motion). We model the $\mathcal G$ using a CNN, which offers a significantly compressed representation. Specifically, the number of parameters required by the model (CNN weights and latent vectors) are several orders of magnitude smaller than required for the direct representation of the images. The compact model proportionately reduces the number of measurements needed to recover the images. In addition, the compression also enables algorithms with much smaller memory footprint and computational complexity. We propose to jointly optimize for the network parameters $\theta$ and the latent vector $\mathbf{z}_t$ based on the given measurements. The smoothness of the manifold generated by $\mathcal{G}_\theta(\mathbf z)$ depends on the gradient of $\mathcal G_{\theta}$ with respect to its input. To enforce the learning of a smooth image manifold, we regularize the norm of the Jacobian of the mapping $\|J_{z} \mathcal G_{\theta}\|^2$. We experimentally observe that by penalizing the gradient of the mapping, the network is encouraged to learn meaningful mappings. Similarly, the images in the time series are expected to vary smoothly in time. Hence, we also use a Tikhonov smoothness penalty on the latent vectors $\mathbf z_t$ to further constrain the solutions. We use the ADAM optimizer with stochastic gradients, where random batches of $\mathbf z_i$ and $\mathbf b_i$ are chosen at iteration to determine the parameters. Unlike traditional CNN methods that are fast during testing/inference, the direct application of this scheme to the dynamic MRI setting is computationally expensive. We use approximations, including progressive-in-time optimization and an approximated data term that avoids non-uniform fast Fourier transforms, to significantly reduce the computational complexity of the algorithm. 

The proposed approach is inspired by deep image prior (DIP), which was introduced for static imaging problems \cite{ulyanov2018deep}, as well as its extension to dynamic imaging \cite{jin2019time}. The key difference of the proposed formulation is the joint optimization of the latent variables $\mathbf z$ and $\mathcal G$. The work of Jin ea tl. \cite{jin2019time} was originally developed for CINE MRI, where the latent variables were obtained by linearly interpolating noise variables at the first and last frames. Their extension to real-time applications involved setting noise latent vectors at multiples of a preselected period, followed by linearly interpolating the noise variables. This approach is not ideally suited for applications with free breathing, when the motion is not periodic. Another key distinction is the use of regularization priors on the network parameters and latent vectors, which encourages the mapping to be an isometry between latent and image spaces. Unlike DIP methods, the performance of the network does not significantly degrade with iterations. While we call our algorithm ``generative SToRM'', we note that our goal is not to generate random images from stochastic inputs as in generative-adversarial networks (GAN). In particular, we do not use adversarial loss functions where a discriminator is jointly learned as in the literature \cite{bora2018ambientgan,kazuhiro2018generative}.

\section{Background}
\subsection{Dynamic MRI from undersampled data: problem setup}
Our main focus is to recover a series of images $\mathbf x_1,..\mathbf x_M$ from their undersampled multichannel MRI measurements. The multidimensional dataset is often compactly represented by its Casoratti matrix 
\begin{equation}
\mathbf X = \begin{bmatrix}
\mathbf x_1 & ... &\mathbf x_M
\end{bmatrix}.
\end{equation}
Each of the images is acquired by different multichannel measurement operators
\begin{equation}\label{key}
\mathbf b_i = \mathcal A_i(\mathbf x_i) + \mathbf n_i,
\end{equation}
where $\mathbf n_i$ is zero mean Gaussian noise matrix that corrupts the measurements.

\subsection{Smooth manifold models for dynamic MRI}
The smooth manifold methods model images $\mathbf x_i$ in the dynamic time series as points on a smooth manifold $\mathcal M$. These methods are motivated by continuous domain formulations that recover a function $f$ on a manifold from its measurements as 
\begin{equation}\label{key}
f = \arg \min_{f} \sum_i\|f(\mathbf x_i)-\mathbf b_i\|^2 + \lambda \int_{\mathcal M}  \|\nabla_{\mathcal M} f\|^2 d\mathbf x
\end{equation}
where the regularization term involves the smoothness of the function on the manifold. 
This problem is adapted to the discrete setting to solve for images lying on a smooth manifold from its measurements as
\begin{equation}\label{storm1}
\mathbf X = \arg \min_{\mathbf X} \sum_{i=1}^M\|\mathcal A(\mathbf x_i) - \mathbf b_i\|^2 + \lambda~{\mathrm{trace}}(\mathbf{X}\mathbf{L}\mathbf{X}^H),
\end{equation}
where $\mathbf L$ is the graph Laplacian matrix. $\mathbf L$ is the discrete approximation of the Laplace-Beltrami operator on the manifold, which depends on the structure or geometry of the manifold. The manifold matrix $\mathbf L$ is estimated from k-space navigators. Different approaches, ranging from proximity-based methods \cite{poddar2015dynamic} to kernel low-rank regularization \cite{poddar2019manifold} and sparse optimization \cite{nakarmi2018mls}, have been introduced. 

The results of earlier work \cite{zou2019sampling,poddar2019manifold} show that the above manifold regularization penalties can be viewed as an analysis prior. In particular, these schemes rely on a fixed non-linear mapping $\varphi$ of the images. The theory shows that if the images $\mathbf x_1,..\mathbf x_M$ lie in a smooth manifold/surface or union of manifolds/surfaces, the mapped points live on a subspace or union of subspaces. The low-dimensional property of the mapped points $\varphi(\mathbf x_1),..\varphi(\mathbf x_M)$ is used to recover the images from undersampled data or derive the manifold using a kernel low-rank minimization scheme:
\begin{equation}\label{storm}
\mathbf X^* = \arg \min_{\mathbf X} \sum_{i=1}^M\|\mathcal A(\mathbf x_i) - \mathbf b_i\|^2  + \lambda~ \|\left[\varphi(\mathbf x_1),..,\varphi(\mathbf x_N)\right]\|_*.
\end{equation}
This nuclear norm regularization scheme is minimized using an iterative reweighted algorithm, whose intermediate steps match \eqref{storm1}. The non-linear mapping $\varphi$ may be viewed as an analysis operator that transforms the original images to a low-dimensional latent subspace, very similar to analysis sparsity-based approaches used in compressed sensing.

\subsection{Unsupervised learning using Deep Image Prior}
The recent work of DIP uses the structure of the network as a prior \cite{ulyanov2018deep}, enabling the recovery of images from ill-posed measurements without any training data. Specifically, DIP relies on the property that CNN architectures favor image data more than noise. The regularized reconstruction of an image from undersampled and noisy measurements is posed in DIP as 
\begin{equation}\label{dip}
\{\boldsymbol \theta^*\} = \arg \min_{\boldsymbol\theta}\left\|\mathcal A(\mathbf x) - \mathbf b\right\|^2 ~~\mbox{such that} ~~ \mathbf x = \mathcal G_{\boldsymbol \theta}[\mathbf z]
\end{equation} 
where $\mathbf x=\mathcal G_{\boldsymbol \theta^*}(\mathbf z)$ is the recovered image, generated by the CNN generator $\mathcal G_{\boldsymbol\theta^*}$ whose parameters are denoted by $\boldsymbol\theta$. Here, $\mathbf z$ is the random latent variable, which is chosen as random noise and kept fixed. 

The above optimization problem is often solved using stochastic gradient descent (SGD). Since CNNs are efficient in learning natural images, the solution often converges quickly to a good image. However, when iterated further, the algorithm also learns to represent the noise in the measurements if the generator has sufficient capacity, resulting in poor image quality. The general practice is to rely on early termination to obtain good results. This approach was recently extended to the dynamic setting by Jin et al. \cite{jin2019time}, where a sequence of random vectors was used as the input.

\section{Deep generative SToRM model}
We now introduce a synthesis SToRM formulation for the recovery of images in a time series from undersampled data (see Fig. \ref{illus}.(b)). Rather than relying on a non-linear mapping of images to a low-dimensional subspace  \cite{poddar2019manifold} (see Fig. \ref{illus}.(a)), we model the images in the time series as non-linear functions of latent vectors living in a low-dimensional subspace. 

\subsection{Generative model}
We model the images in the time series as 
\begin{equation}\label{genmodel}
\mathbf x_i = \mathcal G_{\theta}(\mathbf z_i), i=1,..,M,
\end{equation}
where $\mathcal G_{\theta}$ is a non-linear mapping, which is termed as the generator. Inspired by the extensive work on generative image models \cite{goodfellow2014generative,arjovsky2017wasserstein,ulyanov2018deep}, we represent $\mathcal G_{\theta}$ by a deep CNN, whose weights are denoted by $\theta$. The parameters $\mathbf z_i$ are the latent vectors, which live in a low-dimensional subspace. The non-linear mapping $\mathcal G_{\theta}$ may be viewed as the inverse of the image-to-latent space mapping $\varphi$, considered in the SToRM approach. 
\begin{figure}[!h]
	\begin{center}
\includegraphics[width=\textwidth]{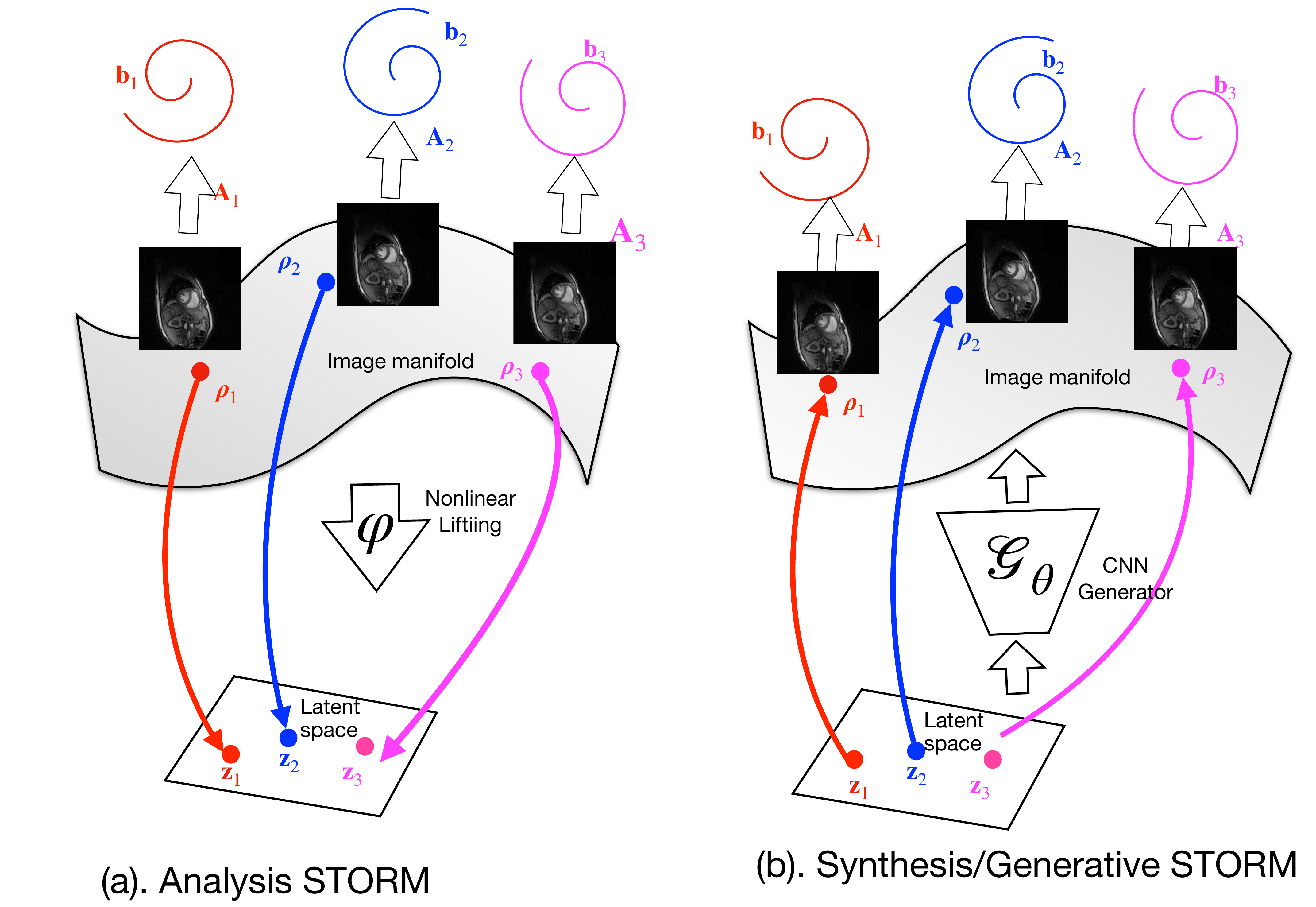}
	\end{center}
\caption{Illustration of (a) analysis SToRM and (b) generative SToRM. Analysis SToRM considers a non-linear (e.g. exponential) lifting of the data. If the original points lie on a smooth manifold, the lifted points lie on a low-dimensional subspace. The analysis SToRM cost function in \eqref{storm} is the sum of the fit of the recovered images to the undersampled measurements and the nuclear norm of the lifted points. A challenge with analysis SToRM is its high memory demand and the difficulty in adding spatial regularization. The proposed method models the images as the non-linear mapping $\mathcal G_{\theta}$ of some latent vectors $\mathbf{z}_i$, which lie in a very low-dimensional space. Note that the same generator is used to model all the images in the dataset. The number of parameters of the generator and the latent variables is around the size of a single image, which implies a highly compressed representation. In addition, the structure of the CNN offers spatial regularization as shown in DIP. The proposed algorithm in \eqref{gen_storm} estimates the parameters of the generator and the latent variables from the measured data. A distance regularization prior is added to the generator to ensure that nearby points in the latent subspace are mapped to nearby points on the manifold. Similarly, a temporal regularization prior is added to the latent variables. The optimization is performed using ADAM with batches of few images. }
\label{illus}
\end{figure}

We propose to estimate the parameters of the network $\theta$ as well as the latent vectors $\mathbf z_i$ by fitting the model to the undersampled measurements. The main distinction of our framework with DIP, which is designed for a single image, is that we use the same generator for all the images in the dynamic dataset. The latent vector $\mathbf z_i$ for each image is different and is also estimated from the measurements. This strategy allows us to exploit non-local information in the dataset. For example, in free-breathing cardiac MRI, the latent vectors of images with the same cardiac and respiratory phase are expected to be similar. When the gradient of the network is bounded, the output images at these time points are expected to be the same. The proposed framework is hence expected to learn a common representation from these time-points, which are often sampled using different sampling trajectories. Unlike conventional manifold methods \cite{poddar2015dynamic,poddar2019manifold,nakarmi2018mls}, the use of the CNN generator also offers spatial regularization.

It is often impossible to acquire fully-sampled training data in many free-breathing dynamic imaging applications, and a key benefit of this framework over conventional neural network schemes  is that no training data is required. As discussed previously, the number of parameters of the model in \eqref{genmodel} is orders of magnitude smaller than the number of pixels in the dataset. The dramatic compression offered by the representation, together with the mini-batch training provides a highly memory-efficient alternative to current manifold based and low-rank/tensor approaches. Although our focus is on establishing the utility of the scheme in 2-D settings in this paper, the approach can be readily translated to higher dimensional applications. Another benefit is the implicit spatial regularization brought in by the convolutional network as discussed for DIP. We now introduce novel regularization priors on the network and the latent vectors to further constrain the recovery to reduce the need for manual early stopping. 

\subsection{Distance/Network regularization}
As in the case of analysis SToRM regularization \cite{poddar2015dynamic,poddar2019manifold}, our interest is in generating a manifold model that preserves distances. Specifically, we would like the nearby points in the latent space to map to similar images on the manifold. With this interest, we now study the relation between the Euclidean distances between their latent vectors and the shortest distance between the points on the manifold (geodesic distance).

We consider two points $\mathbf z_1$ and $\mathbf z_2$ in the latent space, which are fed to the generator to obtain $\mathcal G(\mathbf z_1)$ and $\mathcal G(\mathbf z_2)$, respectively. We have the following result, which relates the the Euclidean distance $\|\mathbf z_1-\mathbf z_2\|^2$ to the geodesic distance ${\rm dist}_{\mathcal M}\left(\mathcal G(\mathbf z_1),\mathcal G(\mathbf z_2)\right)$, which is the shortest distance on the manifold. The setting is illustrated in Fig. \ref{illusdistance}, where the geodesic distance is indicated by the red curve.
\begin{prop}
	Let $\mathbf z_1,\mathbf z_2 \in \mathbb R^n$ be two nearby points in the latent space, with mappings denoted by $\mathcal G(\mathbf z_1),\mathcal G(\mathbf z_2) \in \mathcal M$. Here, $\mathcal M = \{G(\mathbf z)|\mathbf z \in \mathbb R^n\}$. Then, the geodesic distance on the manifold satisfies:
	\begin{equation}\label{thmresult}
	{\rm dist}_{\mathcal M}\big(\mathcal G(\mathbf z_1),\mathcal G(\mathbf z_2)\big) \leq \|\mathbf z_1-\mathbf z_2\|_F ~\| J_z\big(\mathcal G\left(\mathbf z_1\right)\big) \|_F .
	\end{equation}
\end{prop}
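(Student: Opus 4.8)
The plan is to exhibit an explicit curve on $\mathcal M$ joining the two image points and bound its length, since the geodesic distance is by definition the infimum of lengths over all connecting curves. First I would connect $\mathbf z_1$ and $\mathbf z_2$ by the straight-line segment in the latent space, $\gamma(t) = \mathbf z_1 + t(\mathbf z_2 - \mathbf z_1)$ for $t \in [0,1]$, and push it forward through the generator to obtain the curve $c(t) = \mathcal G(\gamma(t))$, which lies on $\mathcal M$ and satisfies $c(0) = \mathcal G(\mathbf z_1)$ and $c(1) = \mathcal G(\mathbf z_2)$. Because the geodesic distance is the shortest among all such curves, it is bounded above by the arc length of $c$:
\begin{equation}
{\rm dist}_{\mathcal M}\big(\mathcal G(\mathbf z_1),\mathcal G(\mathbf z_2)\big) \leq \int_0^1 \|c'(t)\|\, dt.
\end{equation}

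Next I would differentiate $c$ using the chain rule, $c'(t) = J_z\mathcal G(\gamma(t))\,(\mathbf z_2 - \mathbf z_1)$, and apply the submultiplicative bound $\|c'(t)\| \leq \|J_z \mathcal G(\gamma(t))\|_F\,\|\mathbf z_2 - \mathbf z_1\|$, which is valid since the Frobenius norm dominates the spectral norm. Pulling the constant factor $\|\mathbf z_1 - \mathbf z_2\|$ out of the integral then yields
\begin{equation}
{\rm dist}_{\mathcal M}\big(\mathcal G(\mathbf z_1),\mathcal G(\mathbf z_2)\big) \leq \|\mathbf z_1 - \mathbf z_2\|_F \int_0^1 \|J_z \mathcal G(\gamma(t))\|_F\, dt.
\end{equation}

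The main obstacle is the final step: replacing the path-averaged Jacobian $\int_0^1 \|J_z \mathcal G(\gamma(t))\|_F\, dt$ with the single-point value $\|J_z \mathcal G(\mathbf z_1)\|_F$ that appears on the right-hand side of the claim. This is precisely where the hypothesis that $\mathbf z_1$ and $\mathbf z_2$ are \emph{nearby} becomes essential: since the segment $\gamma(t)$ stays close to $\mathbf z_1$ throughout, continuity of the Jacobian (assuming $\mathcal G \in C^1$) gives $\|J_z \mathcal G(\gamma(t))\|_F \to \|J_z \mathcal G(\mathbf z_1)\|_F$ uniformly as $\mathbf z_2 \to \mathbf z_1$, so the integral collapses to the single-point norm. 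I therefore expect the stated inequality to hold as a first-order, local estimate, and I would make this rigorous either by a Taylor expansion of $\mathcal G$ about $\mathbf z_1$ with explicit control on the remainder, or by stating the result in the limit of infinitesimally separated latent vectors. The use of the Frobenius norm rather than the spectral norm makes the bound slightly looser but keeps it consistent with the Jacobian penalty $\|J_z \mathcal G_\theta\|^2$ actually used in the regularizer.
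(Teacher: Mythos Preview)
Your proposal is correct and follows essentially the same route as the paper: push the latent straight-line segment through $\mathcal G$, bound the arc length via the chain rule and the Frobenius submultiplicative inequality, and then invoke the ``nearby'' hypothesis to replace the path-averaged Jacobian by its value at $\mathbf z_1$ (the paper does this via $J_z\mathcal G(c(t)) = J_z\mathcal G(\mathbf z_1)+\mathcal O(t)$). Your parameterization is in fact cleaner than the paper's, and you correctly flag that the inequality is really a first-order local estimate rather than an exact bound.
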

\begin{proof}
The straight-line between the latent vectors is denoted by $c(s), s\in[0,1]$ with $c(0)=\mathbf z_1$ and $c(1)=\mathbf z_2$. We also assume that the line is described in its curvilinear abscissa, which implies $\|c'(s)\|=1; \forall s \in [0,1]$. We note that $\mathcal G$ may map to the black curve, which may be longer than the geodesic distance. We now compute the length of the black curve $\mathcal G[c(s)]$ as 
\begin{equation}\label{distance}
d = \int_{0}^{1} \|\nabla_s \mathcal G\left[c(s)\right]\|ds.
\end{equation}
Using the chain rule and denoting the Jacobian matrix of $\mathcal G$ by $J_z$, we can simplify the above distance as
\begin{eqnarray}\nonumber
d &=& \int_{0}^{1} \|J_z \left(\mathcal G\right) c'(s) \|_F ds\\\nonumber
&\leq & \int_{0}^{1} \|J_z \left(\mathcal G\right) \|_F~ \underbrace{\|c'(s)\|_F}_{1} ds\\
&=&\|J_z \left(\mathcal G[\mathbf z_1]\right) \|_F \underbrace{\int_{0}^{1}ds}_{\|\mathbf z_1-\mathbf z_2\|}.
\end{eqnarray}
We used the Cauchy-Schwartz inequality in the second step and in the last step, we use the fact that $J_z \mathcal G\left(c(t)\right) = J_z \mathcal G\left(\mathbf z_1\right)+ \mathcal O(t)$ when the points $\mathbf z_1$ and $\mathbf z_2$ are close. Since the geodesic distance is the shortest distance on the manifold, we have ${\rm dist}_{\mathcal M}\big(\mathcal G(\mathbf z_1),\mathcal G(\mathbf z_2)\big) \leq d$ and hence we obtain \eqref{thmresult}.
\end{proof}
 
 The result in \eqref{thmresult} shows that the Frobenius norm of the Jacobian matrix $\|J_z \mathcal G\|$ controls how far apart $\mathcal G$ maps two vectors that are close in the latent space. We would like points that are close in the latent space map to nearby points on the manifold. We hence use the gradient of the map:
\begin{equation}\label{key}
R_{\rm distance}  = \|J_z\big( \mathcal G(\mathbf z) \big)\|_F^2
\end{equation}
as a regularization penalty. We note that the above penalty will also encourage the learning of a mapping $\mathcal G$ such that the length of curve $\mathcal G(c(t))$ is the geodesic distance. We note that the above penalty can also be thought of as a network regularization. Similar gradient penalties are used in machine learning to improve generalization ability and to improve the robustness to adversarial attacks \cite{varga2017gradient}. 
The use of gradient penalty is observed to be qualitatively equivalent to penalizing the norm of the weights of the network. 

\begin{figure}[!h]
	\begin{center}
		\includegraphics[width=0.7\textwidth]{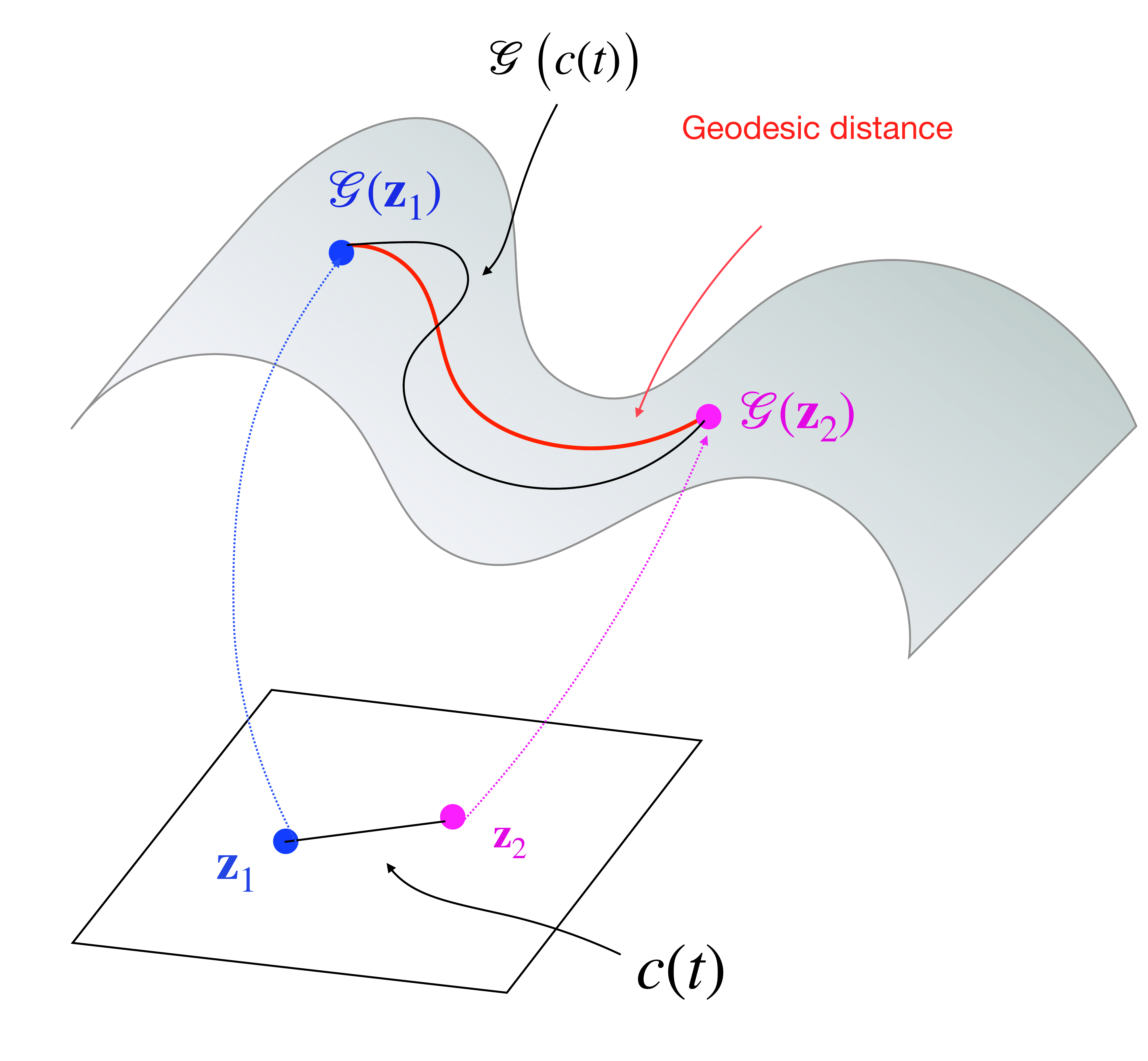}
	\end{center}
	\caption{Illustration of the distance penalty. The length of the curve connecting the images corresponding to $\mathbf z_1$ and $\mathbf z_2$ depends on the Frobenius norm of the Jacobian of the mapping $\mathcal G$ as well as the Euclidean distance $\|\mathbf z_1-\mathbf z_2\|^2$. We are interested in learning a mapping that preserves distances; we would like nearby points in the latent space to map to similar images. We hence use the norm of the Jacobian as the regularization prior, with the goal of preserving distances.  }
\vspace{-1\baselineskip}
	\label{illusdistance}
\end{figure}

\subsection{Latent vector regularization penalty}
The time frames in a dynamic time series have extensive redundancy between adjacent frames, which is usually capitalized by temporal gradient regularization. Directly penalizing the temporal gradient norm of the images requires the computation of the entire image time series, which is difficult when the entire image time series is not optimized in every batch. 

We consider the norm of the finite differences between images specified by $\|\nabla_p\mathbf G[\mathbf z_p] \|^2$. Using Taylor series expansion, we obtain $\nabla_p\mathbf G[\mathbf z_p] = J_{\mathbf z}(\mathcal G[\mathbf z]) \nabla_p \mathbf{z} + \mathcal O(p)$. We thus have
\begin{equation}\label{key}
\|\nabla_p\mathcal G[\mathbf z_p] \| \approx \| J_{\mathbf z}(\mathcal G[\mathbf z]) \nabla_p \mathbf{z}\| \leq \|J_{\mathbf z}(\mathcal G[\mathbf z])\|~ \|\nabla_{p} \mathbf z\|.
\end{equation}
Since $ J_{\mathbf z}(\mathcal G[\mathbf z])$ is small because of the distance regularization, we propose to add a temporal regularizer on the latent vectors. For example, when applied to free-breathing cardiac MRI, we expect the latent vectors to capture the two main contributors of motion: cardiac motion and respiratory motion. The temporal regularization encourages the cardiac and respiratory phases change slowly in time. 

\subsection{Proposed optimization criterion}
Based on the above analysis, we derive the parameters of the network $\theta$ and the low-dimensional latent vectors $\mathbf z_i; i=1,..,M$ from the measured data by minimizing:
\begin{eqnarray}\nonumber\label{gen_storm}
\mathcal C(\mathbf z,\theta)&=&  \underbrace{\sum_{i=1}^N\|\mathcal A_i\left(\mathcal G_{\theta}[\mathbf z_i]\right) - \mathbf b\|^2}_{\scriptsize\mbox{data term}} + \lambda_1 \underbrace{\|J_{\mathbf z} \mathcal G_{\theta}(\mathbf{z})\|^2}_{\scriptsize \mbox{distance regularization}}\\\label{gen-storm}&&\qquad +  \lambda_2 \underbrace{\|\nabla_{t} \mathbf z_t\|^2 }_{\scriptsize\mbox{latent regularization}}
\end{eqnarray}
with respect to $\mathbf z$ and $\theta$. We use the ADAM optimization to determine the optimal parameters, and random initialization is used for the network parameters and latent variables. 

A potential challenge with directly solving \eqref{gen-storm} is its high computational complexity. Unlike supervised neural network approaches that offer fast inference, the proposed approach optimizes the network parameters based on the measured data. This strategy will amount to a long reconstruction time when there are several image frames in the time series. 

\subsection{Strategies to reduce computational complexity}
To  minimize the computational complexity, we now introduce some approximation strategies.

\subsubsection{Approximate data term for accelerated convergence}\label{dataterm}
When the data is measured using non-Cartesian sampling schemes, $M$ non-uniform fast Fourier transform (NUFFT) evaluations are needed for the evaluation of the data term, where $M$ is the number of frames in the dataset. Similarly, $M$ inverse non-uniform fast Fourier transform (INUFFT) evaluations are needed for each back-propagation step. These NUFFT evaluations are computationally expensive, resulting in slow algorithms. In addition, most non-Cartesian imaging schemes over-sample the center of k-space. Since the least-square loss function in \eqref{storm} weights errors in the center of k-space higher than in outer k-space regions, it is associated with slow convergence.

To speed up the intermediate computations, we propose to use gridding with density compensation, together with a projection step for the initial iterations. Specifically, we will use the approximate data term 
\begin{equation}\label{proj}
D(\mathbf z,\theta) = \sum_{i=1}^M \|\mathcal P_i\left(\mathcal G_{\theta}[\mathbf z_i]\right) -\mathbf g_i  \|^2
\end{equation}
instead of $\sum_i\|\mathcal A_i\left(\mathcal G[\mathbf z_i]\right)-\mathbf b_i\|^2$ in early iterations to speed up the computations. Here, $\mathbf g_i$ are the gridding reconstructions
\begin{equation}\label{key}
\mathbf g_i =  \left(\mathcal A_i^H\mathcal A_i\right)^{\dag} \mathcal A_i^H  ~\mathbf b_i \approx \mathcal A_i^H~ \mathcal W~\mathbf b,
\end{equation}
where, $\mathcal W$ are diagonal matrices corresponding to multiplication by density compensation factors. The operators $\mathcal P_i$ in \eqref{proj} are projection operators:
\begin{equation}
\mathcal P_i ~\mathbf x = \left(\mathcal A_i^H\mathcal A_i\right)^{\dag} \left(\mathcal A_i^H \mathcal A_i\right) ~\mathbf x \approx \left(\mathcal A_i^H~\mathcal W~\mathcal A_i\right) \mathbf x
\end{equation}
 We note that the term $\left(\mathcal A_i^H~\mathcal W~\mathcal A_i\right) \mathbf x$ can be efficiently computed using Toeplitz embedding, which eliminates the need for expensive NUFFT and INUFFT steps. In addition, the use of the density compensation serves as a preconditioner, resulting in faster convergence. Once the algorithm has approximately converged, we switch the loss term back to \eqref{storm} since it is optimal in a maximum likelihood perspective. 

\subsubsection{Progressive training-in-time}\label{ptt}
To further speed up the algorithm, we introduce a progressive training strategy, which is similar to multi-resolution strategies used in image processing. In particular, we start with a single frame obtained by pooling the measured data from all the time frames. Since this average frame is well-sampled, the algorithm promptly converges to the optimal solution. The corresponding network serves as a good initialization for the next step. Following convergence, we increase the number of frames. The optimal $\theta$ parameters from the previous step are used to initialize the generator, while the latent vector is initialized by the interpolated version of the latent vector at the previous step. This process is repeated until the desired number of frames is reached. 

\begin{figure*}[!h]
	\begin{center}
\includegraphics[width=0.8\textwidth]{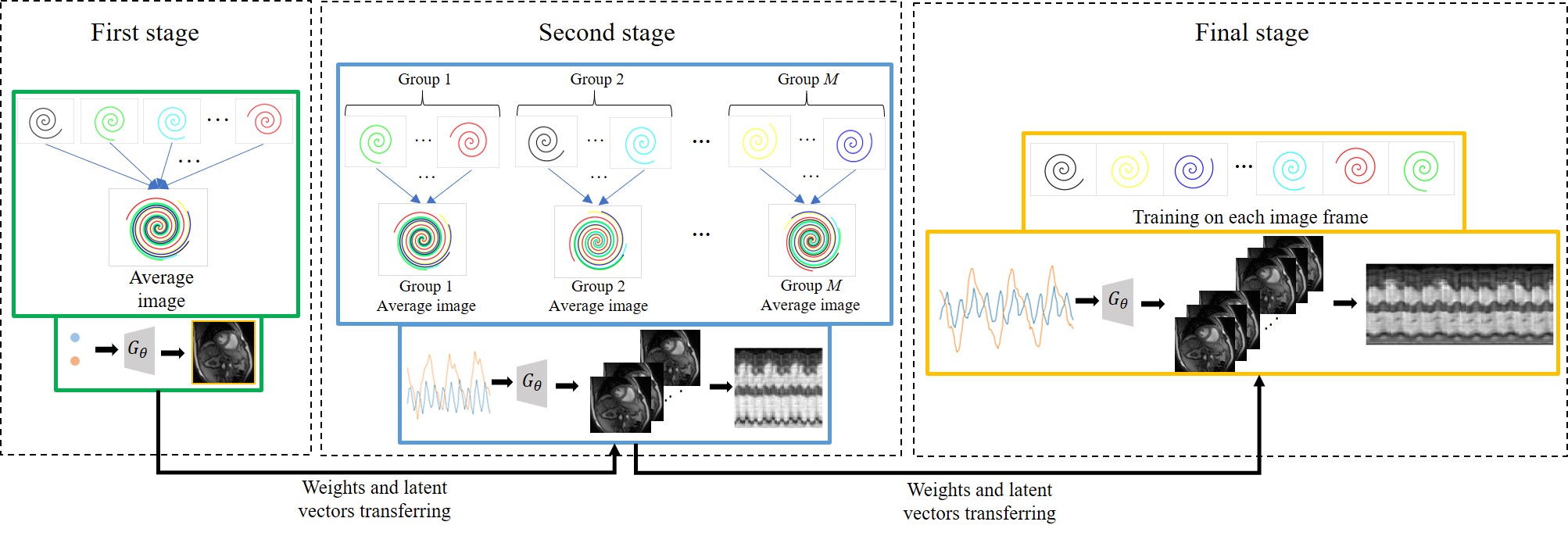}
\caption{Illustration of the progressive training-in-time approach. In the first level of training, the k-space data of all the frames are binned into one and we try to solve for the average image in this level. Upon the convergence of the first step, the parameters and latent variables are transferred as the initialization of the second step. In the second level of training, we divide the k-space data into $M$ groups and try to reconstruct the $M$ average images. Following the convergence, we can move to the final level of training, where the parameters obtained in the second step and the linear interpolation of the latent vectors in the second step are chosen as the initializations of the final step of training.}
\label{progressive}
\end{center}
\end{figure*}

This progressive training-in-time approach significantly reduces the computational complexity of the proposed algorithm. In this work, we used a three-step algorithm. However, the number of steps (levels) of training can be chosen based on the dataset. This progressive training-in-time approach is illustrated in Fig. \ref{progressive}.

\section{Implementation details and datasets}

\subsection{Structure of the generator}
The structure of the generator used in this work is given in Table. \ref{generator}. The output images have two channels, which correspond to the real and imaginary parts of the MR images. Note that we have a parameter $d$ in the network. This user-defined parameter controls the size of the generator or, in other words, the number of trainable parameters in the generator. We also have a number $\ell({\mathbf{z}})$ as a user-defined parameter. This parameter represents the number of elements in each latent vector. In this work, it is chosen as $\ell({\mathbf{z}})=2$ as we have two motion patterns in cardiac images. We use leaky ReLU for all the non-linear activations, except at the output layer, where it is tanh activation.
\begin{table}[!h]
	\centering
	\begin{tabular}{|c | c | c| c| c|c|c| }
		\hline\scriptsize
		Input size & \scriptsize filter sz &\scriptsize \# filters & \scriptsize Padding & \scriptsize Stride & \scriptsize Output size \\
		\hline \hline\scriptsize
		$1\times 1\times \ell({\mathbf{z}})$ &\scriptsize $1\times 1$ &\scriptsize 100 &\scriptsize 0 &\scriptsize 1 &\scriptsize $1\times 1\times 100$  \\ 
		\hline\scriptsize
		$1\times 1\times 100$ &\scriptsize $3\times 3$ &\scriptsize $8d$ &\scriptsize 0 &\scriptsize 1 &\scriptsize $3\times 3\times 8d$  \\ 
		\hline\scriptsize
		$3\times 3\times 8d$ &\scriptsize $3\times 3$ &\scriptsize $8d$ &\scriptsize 0 &\scriptsize 1 &\scriptsize $5\times 5\times 8d$ \\ 
		\hline\scriptsize
		$5\times 5\times 8d$ &\scriptsize $4\times 4$ &\scriptsize $4d$ &\scriptsize  1 &\scriptsize  2 &\scriptsize $10\times 10\times 4d$  \\ 
		\hline\scriptsize
		$10\times 10\times 4d$ &\scriptsize $4\times 4$ &\scriptsize $4d$ &\scriptsize  1 &\scriptsize  2 &\scriptsize $20\times 20\times 4d$  \\ 
		\hline\scriptsize
		$20\times 20\times 4d$ &\scriptsize $3\times 3$ &\scriptsize $4d$ &\scriptsize  0 &\scriptsize  2 &\scriptsize $41\times 41\times 4d$  \\ 
		\hline\scriptsize
		$41\times 41\times 4d$ &\scriptsize $5\times 5$ &\scriptsize $2d$ &\scriptsize  1 &\scriptsize  2 &\scriptsize $85\times 85\times 2d$  \\ 
		\hline\scriptsize
		$85\times 85\times 2d$ &\scriptsize $4\times 4$ &\scriptsize $d$ &\scriptsize  1 &\scriptsize  2 &\scriptsize $170\times 170\times d$  \\ 
		\hline\scriptsize
		$170\times 170\times d$ &\scriptsize $4\times 4$ &\scriptsize $d$ &\scriptsize  1 &\scriptsize  2 &\scriptsize $340\times 340\times d$  \\ 
		\hline\scriptsize
		$340\times 340\times d$ &\scriptsize $3\times 3$ &\scriptsize $2$ &\scriptsize  1 &\scriptsize  2 &\scriptsize $340\times 340\times 2$  \\ 
		\hline
	\end{tabular} 
	\caption{Architecture of the generator $\mathcal G_{\theta}$. $\ell(\mathbf{z})$ means the number of elements in each latent vector.}
\vspace{-1\baselineskip}
	\label{generator}
\end{table}

\subsection{Datasets}

This research study was conducted using data acquired from human subjects. The Institutional Review Board at the local institution (The University of Iowa) approved the acquisition of the data, and written consents were obtained from all subjects. The experiments reported in this paper are based on datasets collected in the free-breathing mode using the golden angle spiral trajectory.We acquired eight datasets on a GE 3T scanner. One dataset was used to identify the optimal hyperparameters of all the algorithms in the proposed scheme. We then used the hyperparameters to generate the experimental results for all the remaining datasets reported in this paper. The sequence parameters for the datasets are: TR = 8.4 ms, FOV = 320 mm$\times$ 320 mm, flip angle = 18$^\circ$, slice thickness = 8 mm. The datasets were acquired using a cardiac multichannel array with 34 channels. We used an automatic algorithm to pre-select the eight best coils, that provide the best signal to noise ratio in the region of interest. The removal of the coils with low sensitivities provided improved reconstructions \cite{zhou2019free}. We used a PCA-based coil combination using SVD such that the approximation error  $<$ 5\%. We then estimated the coil sensitivity maps based on these virtual channels  using the method of Walsh et al. \cite{walsh2000adaptive} and assumed they were constant over time.

For each dataset in this research, we binned the data from six spiral interleaves corresponding to 50 ms temporal resolution. If a Cartesian acquisition scheme with $TR=3.5ms$ were used, this would correspond to $\approx$14 lines/frame; with a $340\times 340$ matrix, this corresponds roughly to an acceleration factor of 24. Moreover, each dataset has more than 500 frames. During reconstruction, we omit the first 20 frames in each dataset and use the next 500 frames for SToRM reconstructions; this is then used as the simulated ground truth for comparisons. The experiments were run on a machine with an Intel Xeon CPU at 2.40 GHz and a Tesla P100-PCIE 16GB GPU. The source code for the proposed Gen-SToRM scheme can be downloaded from this link: {\url{https://github.com/qing-zou/Gen-SToRM}}.

\subsection{Quality evaluation metric}

In this work, the quantitative comparisons are made using the Signal-to-Error Ratio (SER) metric (in addition to the standard Peak Signal-to-Noise Ratio (PSNR) and the Structural Similarity Index Measure (SSIM)) defined as:
\[{\mathrm{SER}} = 20\cdot\log_{10}\frac{\|\mathbf{x}_{orig}\|}{\|\mathbf{x}_{orig}-\mathbf{x}_{recon}\|}.\]
Here $\mathbf{x}_{orig}$ and $\mathbf{x}_{recon}$ represent the ground truth and the reconstructed image. The unit for SER is decibel (dB). 

The SER metric requires a reference image, which is chosen as the SToRM reconstruction with 500 frames. However, we note that this reference may be imperfect and may suffer from blurring and related artifacts. Hence, we consider the Blind/referenceless Image Spatial Quality Evaluator (BRISQUE) \cite{mittal2012no} to evaluate the score of the image quality. The BRISQUE score is a perceptual score based on the support vector regression model trained on an image database with corresponding differential mean opinion score values. The training image dataset contains images with different distortions. A smaller score indicates better perceptual quality.

\begin{figure}[!b]
	\centering
	\subfigure[Performance comparison]{\includegraphics[width=0.37\textwidth]{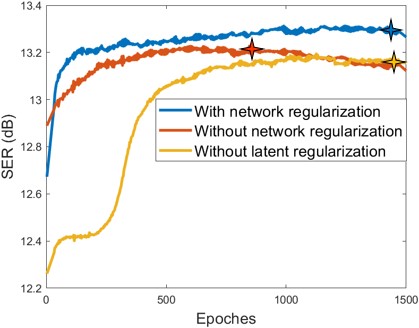}}
	\subfigure[Latent codes with both terms]{\includegraphics[width=0.47\textwidth]{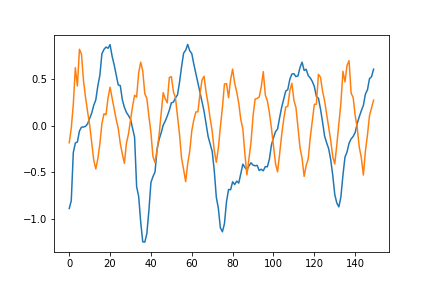}}\\
	\subfigure[Without distance regularization]{\includegraphics[width=0.47\textwidth]{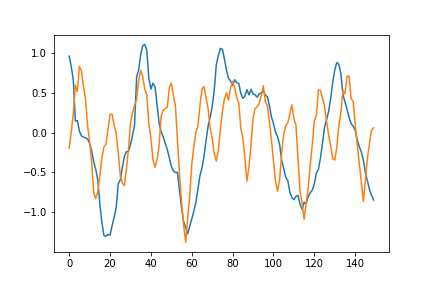}}
	\subfigure[Without latent regularization]{\includegraphics[width=0.47\textwidth]{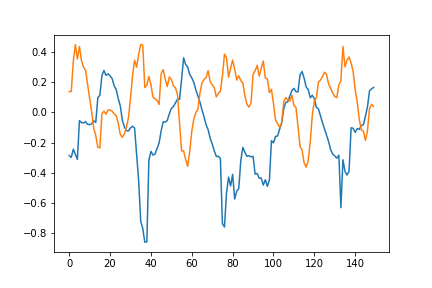}}\\
	\subfigure[Visual and quantitative comparisons]{\includegraphics[width=0.9\textwidth]{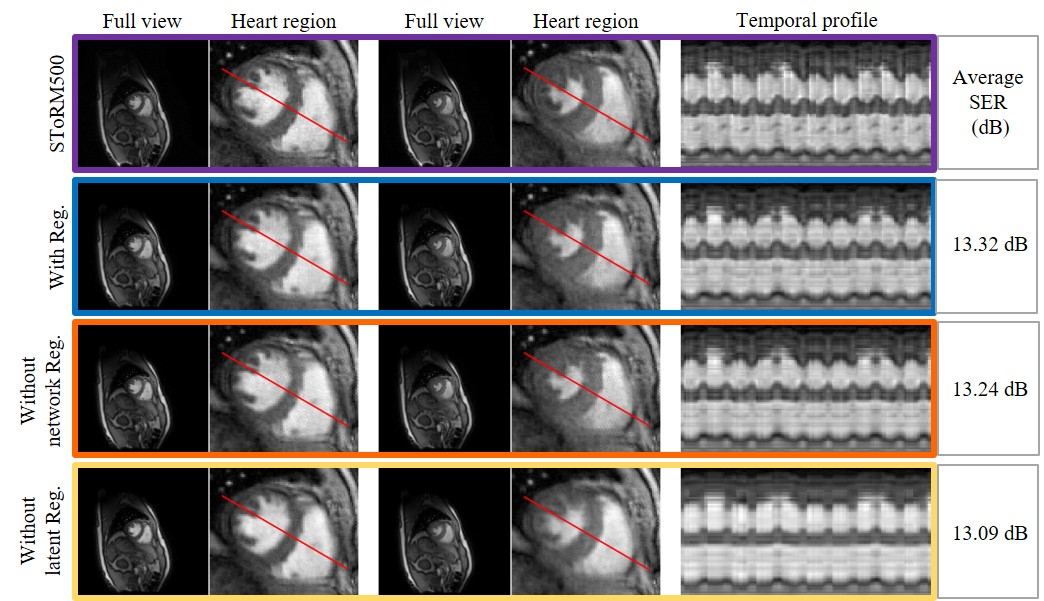}}
	\caption{Illustration of the impact of the regularization terms in the proposed scheme with $d=24$. We considered three cases in the experiment: (1) using both regularizations, (2) using only latent regularization, and (3) using only network regularization; these correspond to the blue, orange, and yellow curves in (a). In (b), (c), and (d), we showed the learned latent vectors for the three cases. The visual and quantitative comparisons of the three cases are shown in (e).}
	\label{reg_impact}
\end{figure}

\subsection{State-of-the-art methods for comparison}\label{meths}

We compare the proposed scheme with the recent state-of-the-art methods for free-breathing and ungated cardiac MRI. We note that while there are many deep learning algorithms for static MRI, those methods are not readily applicable to our setting.
\begin{itemize}
\item Analysis SToRM \cite{poddar2019manifold,ahmed2020free}, published in 2020: The manifold Laplacian matrix is estimated from k-space navigators using kernel low-rank regularization, followed by solving for the images using \eqref{storm1}.
\item Time-DIP \cite{jin2019time} implementation based on the arXiv form at the submission of this article: This is an unsupervised learning scheme, that fixes the latent variables as noise and solves for the generator parameters. For real-time applications, Time-DIP chooses a preset period, and the noise vectors of the frames corresponding to the multiples of the period were chosen as independent Gaussian variables \cite{jin2019time}. The latent variables of the intermediate frames were obtained using linear interpolation. We chose a period of 20 frames, which roughly corresponds to the period of the heart beats.
\item Low-rank \cite{lingala2011accelerated}: The image frames in the time series are recovered using the nuclear norm minimization. 
\end{itemize}

\subsection{Hyperparameter tuning}

We used one of the acquired datasets to identify the hyperparameters of the proposed scheme. Since we do not have access to the fully-sampled dataset, we used the SToRM reconstructions from 500 images (acquisition time of $25$ seconds) as a reference. The smoothness parameter $\lambda$ of this method was manually selected as $\lambda=0.01$ to obtain the best recovery, as in the literature \cite{ahmed2020free}. All of the comparisons relied on image recovery from 150 frames (acquisition time of 7.5 seconds). The hyperparameter tuning approach yielded the parameters $d=40$, $\lambda_1 = 0.0005$, and $\lambda_2 =2$ for the proposed approach. We demonstrate the impact of tuning $d$ in Fig. \ref{generator_size}, while the impact of choosing $\lambda_1$ and $\lambda_2$ is shown in Fig. \ref{reg_impact}. The hyperparameter optimization of SToRM from 150 frames resulted in the optimal smoothness parameter $\lambda=0.0075$. For Time-DIP, we follow the design of the network shown by Jin et al. \cite{jin2019time}, where the generator consists of multiple layers of convolution and upsampling operations. To ensure fair comparison, we used a  similar architecture, where the base size of the network was tuned to obtain the best results.

We use a three-step progressive training strategy. In the first step, the learning rate for the network is $1\times 10^{-3}$ and 1000 epoches are used. For the second step of training, the learning rate for the network is $5\times 10^{-4}$ and the learning rate for the latent variable is $5\times 10^{-3}$. In this stage, 600 epoches are used. In the final step of training, the learning rate for the network is $5\times 10^{-4}$, the learning rate for the latent variable is $1\times 10^{-3}$, and 700 epoches are used.

\section{Experiments and results}

\subsection{Impact of different regularization terms}
We first study the impact of the two regularization terms in \eqref{gen_storm}. The parameter $d$ corresponding to the size of the network (see Table \ref{generator}) was chosen as $d=24$ in this case. In Fig. \ref{reg_impact} (a), we plot the reconstruction performance with respect to the number of epoches for three scenarios: (1) using both regularization terms; (2) using only latent regularization; and (3) using only distance/network regularization. In the experiment, we use 500 frames of SToRM ($\sim$ 25 seconds of acquisition) reconstructions, which is called ``SToRM500'', as the reference for SER computations. We tested the reconstruction performance for the three scenarios using 150 frames, which corresponds to around 7.5 seconds of acquisition. From the plot, we observe that without using the network regularization, the SER degrades with increasing epoches, which is similar to that of DIP. In this case, an early stopping strategy is needed to obtain good recovery. The latent vectors corresponding to this setting are shown in (c), which shows mixing between cardiac and respiratory waveforms. When latent regularization is not used, we observe that the SER plot is roughly flat, but the latent variables show quite significant mixing, which translates to blurred reconstructions. By contrast, when both network and latent regularizations are used, the algorithm converges to a better solution. We also note that the latent variables are well decoupled; the blue curve captures the respiratory motion, while the orange one captures the cardiac motion. We also observe that the reconstructions agree well with the SToRM reconstructions. The network now learns meaningful mappings, which translate to improved reconstructions when compared to the reconstructions obtained without using the regularizers. 

\begin{figure}[!h]
	\centering
	\includegraphics[width=0.7\textwidth]{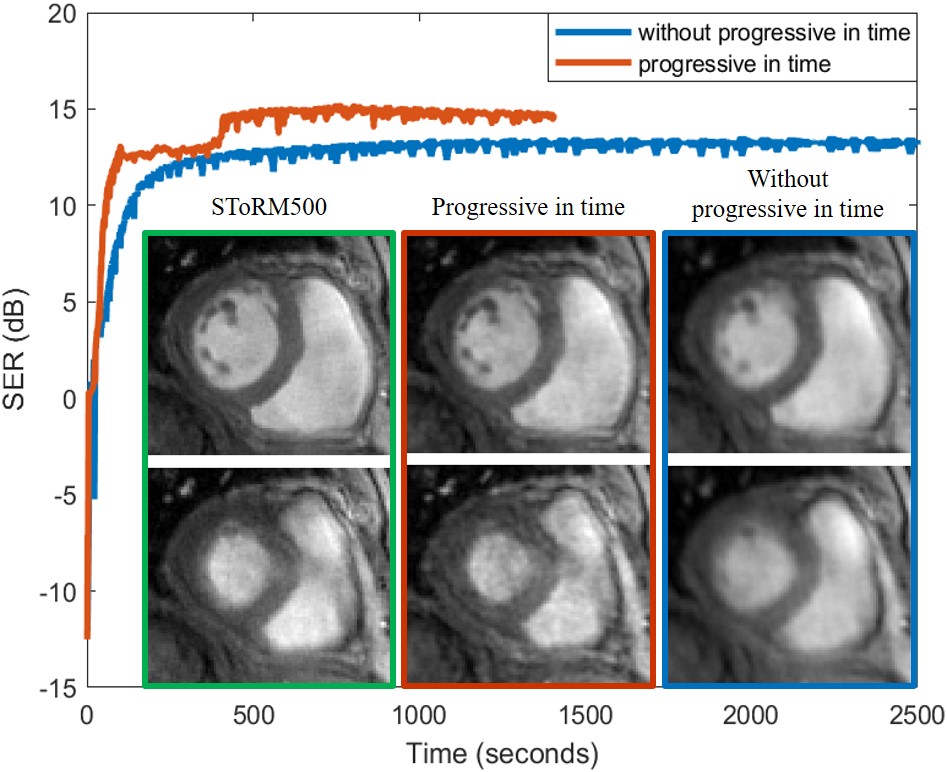}
	\caption{Comparisons of the reconstruction performance with and without the progressive training-in-time strategy using $d=40$. From the plot of SER vs. running time, we can see that the progressive training-in-time approach yields better results with much less running time comparing to the training without using progressive training-in-time. Two reconstructed frames near the end of systole and diastole using SToRM500, the proposed scheme with progressive training-in-time and the proposed scheme without using the progressive training-in-time are shown in the plot as well for comparison purposes. The average Brisque scores for SToRM500, the reconstruction with progressive training-in-time, and the reconstruction without progressive training-in-time are $36.4, 37.3$ and $39.1$ respectively.}
	\label{PGvsNPG}
\end{figure}

\subsection{Benefit of progressive training-in-time approach}


In Fig. \ref{PGvsNPG}, we demonstrate the significant reduction in run-time offered by the progressive training strategy described in Section \ref{ptt}. Here, we consider the recovery from 150 frames with and without the progressive strategy. Both regularization priors were used in this strategy, and $d$ was chosen as 24. We plot the reconstruction performance, measured by the SER with respect to the running time. The SER plots show that the proposed scheme converges in around $\approx 200$ seconds, while the direct approach takes more than 2000 seconds. We also note from the SER plots that the solution obtained using progressive training is superior to the one without progressive training.

\subsection{Impact of size of the network}

The architecture of the generator $\mathcal{G}_{\theta}$ is given in Table \ref{generator}. Note that the size of the network is controlled by the user-defined parameter $d$, which dictates the number of convolution filters and hence the number of trainable parameters in the network. In this section, we investigate the impact of the user-defined parameter $d$ on the reconstruction performance. We tested the reconstruction performance using $d = 8, 16, 24, 32, 40$, and $48$, and the obtained results are shown in Fig. \ref{generator_size}. From the figure, we see that when $d = 8$ or $d=16$, the generator network is too small to capture the dynamic variations. When $d = 8$, the generator is unable to capture both cardiac motion and respiratory motion. When $d=16$, part of the respiratory motion is recovered, while the cardiac motion is still lost. The best SER scores with respect to SToRM with 500 frames is obtained for $d=24$, while the lowest Brisque scores are obtained for $d=40$. We also observe that the features including papillary muscles and myocardium in the $d=40$ results appear sharper than those of SToRM with 500 frames, even though the proposed reconstructions were only performed from 150 frames. We use $d=40$ for the subsequent comparisons in the paper.

\begin{figure}[!h]
	\centering
	\includegraphics[width=0.9\textwidth]{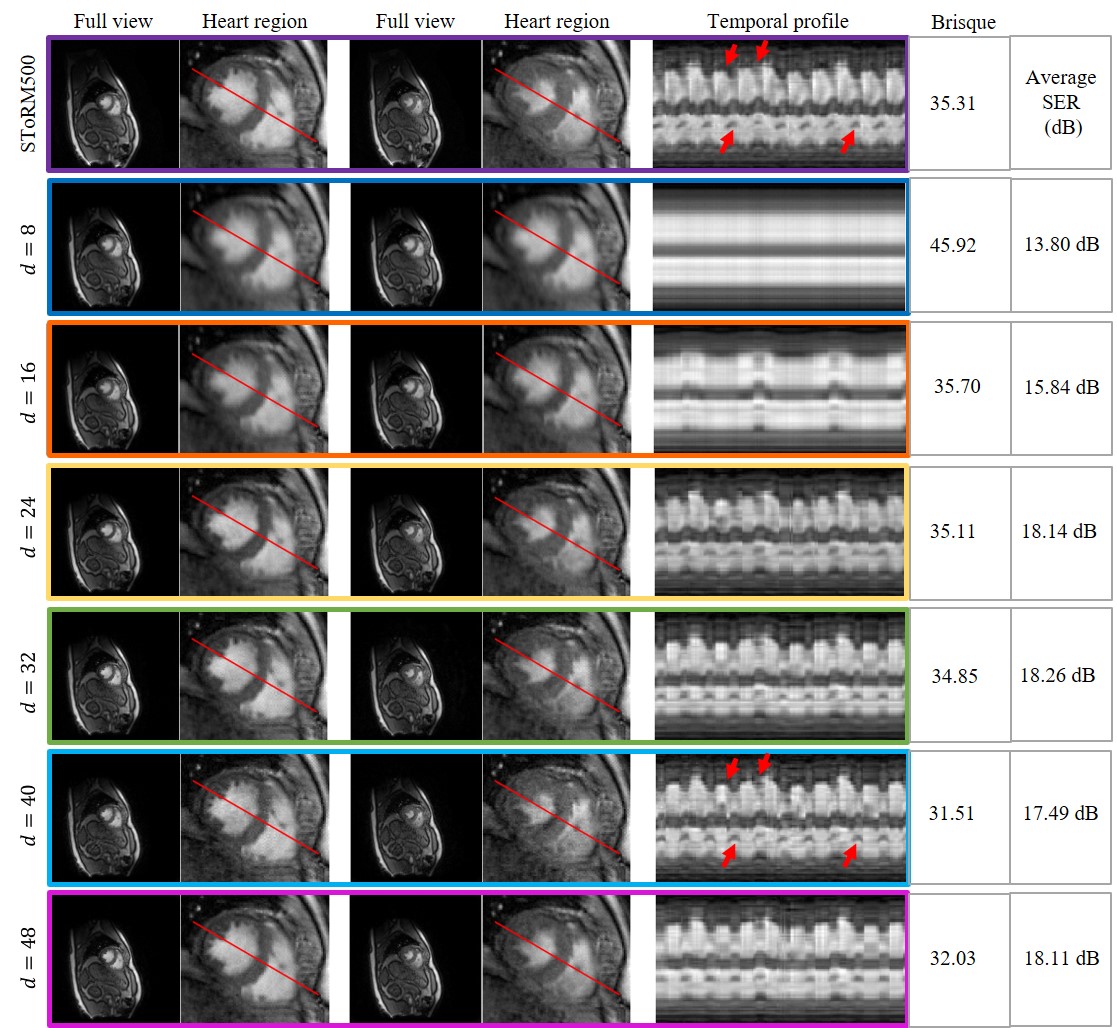}
	\caption{Impact of network size on reconstruction performance. In the experiments, we chose $d = 8, 16, 24, 32, 40$ and $48$ to investigate the reconstruction performance. We used 500 frames for SToRM reconstructions (SToRM500) as the reference for SER comparisons. For the investigation of the impact of network size on the reconstructions, we used 150 frames. The diastolic and systolic states and the temporal profiles are shown in the figure for each case. The Brisque scores and average SER are also reported. It is worth noting that when $d=40$, the results are even less blurred than the SToRM500 results, even though only one-third of the data are used.}
	\label{generator_size}
\end{figure}

\subsection{Comparison with the state-of-the-art methods}

In this section, we compare our proposed scheme with several state-of-the-art methods for the reconstruction of dynamic images. 

\begin{table}[!h]
	\centering
	\begin{tabular}{|c | c | c| c|c| }
		\hline
		\scriptsize Methods &\scriptsize SToRM500 &\scriptsize SToRM150  &\scriptsize  Propsed &\scriptsize  Time-DIP   \\
		\hline \hline
		\scriptsize SER (dB) &\scriptsize NA &\scriptsize$17.3$ &\scriptsize $\mathbf{18.2}$ \scriptsize& \scriptsize $16.7$  \\ 
		\hline
		\scriptsize PSNR (dB) &\scriptsize NA &\scriptsize $32.7$ &\scriptsize  $\mathbf{33.5}$ &\scriptsize $32.0$    \\ 
		\hline
		\scriptsize SSIM &\scriptsize NA &\scriptsize $0.86$ &\scriptsize $\scriptsize \mathbf{0.89}$ &\scriptsize $0.87$   \\ 
		\hline
		\scriptsize Brisque &\scriptsize $\mathbf{35.2}$ &\scriptsize $40.2$ &\scriptsize $37.1$ &\scriptsize $42.9$   \\
		\hline
		\scriptsize Time (min) &\scriptsize 47 &\scriptsize 13 &\scriptsize 17 &\scriptsize 57   \\ 
		\hline
	\end{tabular} 
\caption{Quantitative comparisons based on six datasets: We used six datasets to obtain the average SER, PSNR, SSIM, Brisque score, and time used for reconstruction.}
\label{quan_comp3}
\end{table}

In Fig. \ref{comp22}, we compare the region of interest for SToRM500, SToRM with 150 frames (SToRM150), the proposed method with two different $d$ values, the unsupervised Time-DIP approach, and the low-rank algorithm. From Fig. \ref{comp22}, we observe that the proposed scheme can significantly reduce errors in comparison to SToRM150. Additionally, the proposed scheme is able to capture the motion patterns better than Time-DIP, while the low-rank method is unable to capture the motion patterns. From the time profile in Fig. \ref{comp22}, we notice that the proposed scheme is capable of recovering the abrupt change in blood-pool contrast between diastole and systole. This is due to inflow effects associated with gradient echo (GRE) acquisitions. In particular, the blood from regions outside the slice enters the heart, which did not experience any of the former slice-selective excitation pulses; the differences in magnetization of the blood with no magnetization history, and that was within the slice, results in the abrupt change in intensity. We note that some of the competing methods such as Time-DIP and low-rank, blur these details.  

\begin{figure*}[!h]
	\centering
	\subfigure[Visual comparisons]{\includegraphics[width=0.75\textwidth]{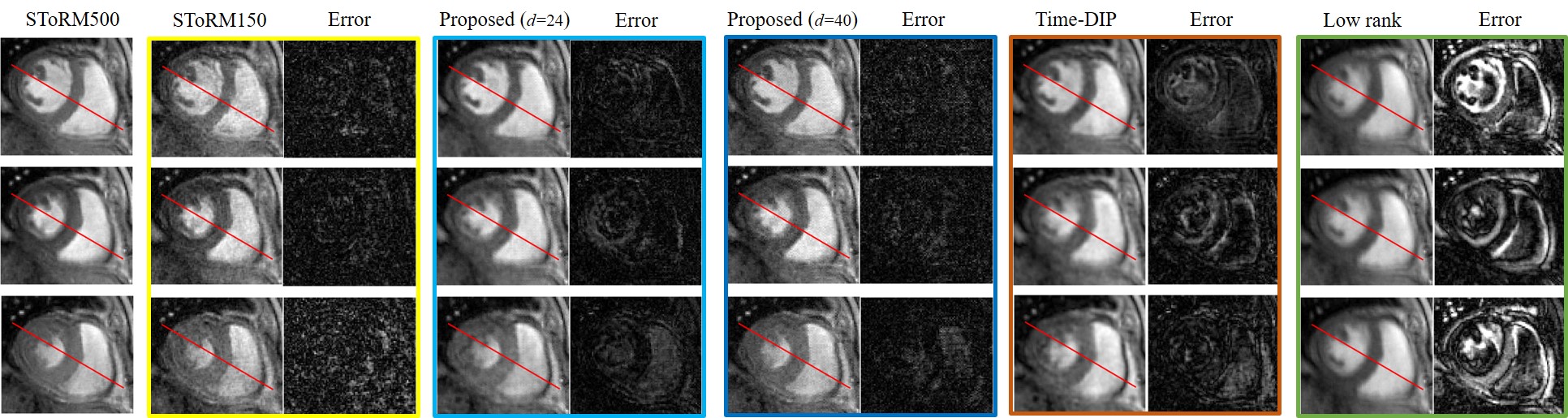}}\quad
	\subfigure[Time profiles]{\includegraphics[width=0.22\textwidth]{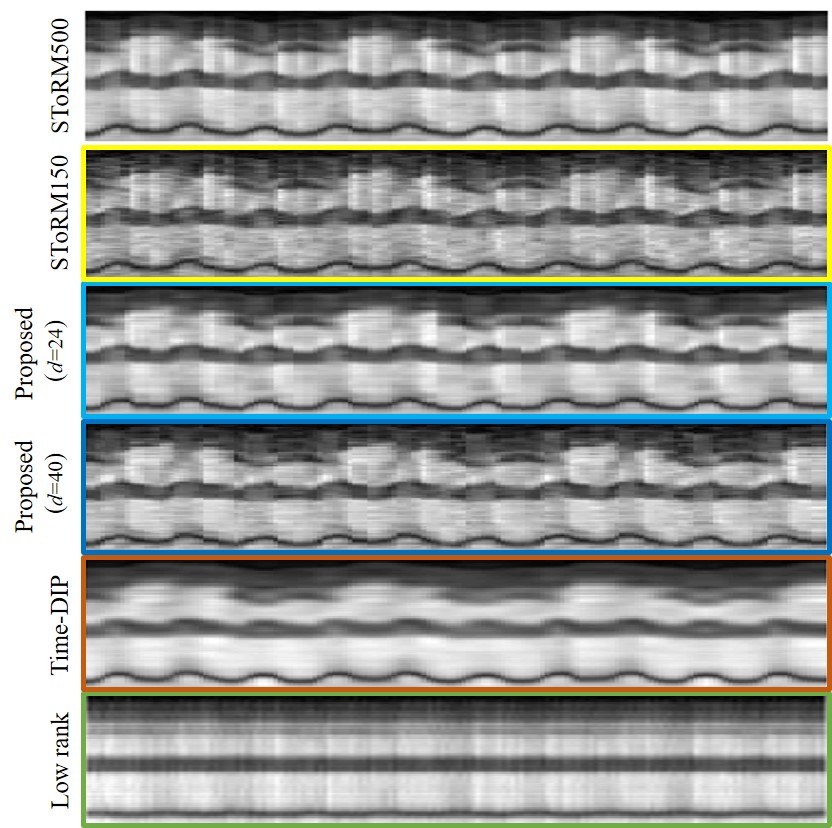}}
	\caption{Comparisons with the state-of-the-art methods. The first column of (a) corresponds to the reconstructions from 500 frames ($\sim$ 25s of acquisition time), while the rest of the columns are recovered from 150 frames ($\sim$ 7.5s of acquisition time). The top row of (a) corresponds to the diastole phase, while the third row is the diastole phase. The second row of (a) is an intermediate one. Fig. (b) corresponds to the time profiles of the reconstructions. We observe that the proposed ($d=40$) reconstructions exhibit less blurring and fewer artifacts when compared to SToRM150 and competing methods.}
	\label{comp22}
\end{figure*}

We also perform the comparisons on a different dataset in Fig. \ref{comp3}. We compare the proposed scheme with SToRM500, SToRM150, Time-DIP, and the low-rank approach. The results are shown in Fig. \ref{comp3}. From the figure, we see that the proposed reconstructions appear less blurred than those of the  conventional schemes.

\begin{figure*}[!h]
	\centering
	\subfigure[Visual comparisons]{\includegraphics[width=0.7\textwidth]{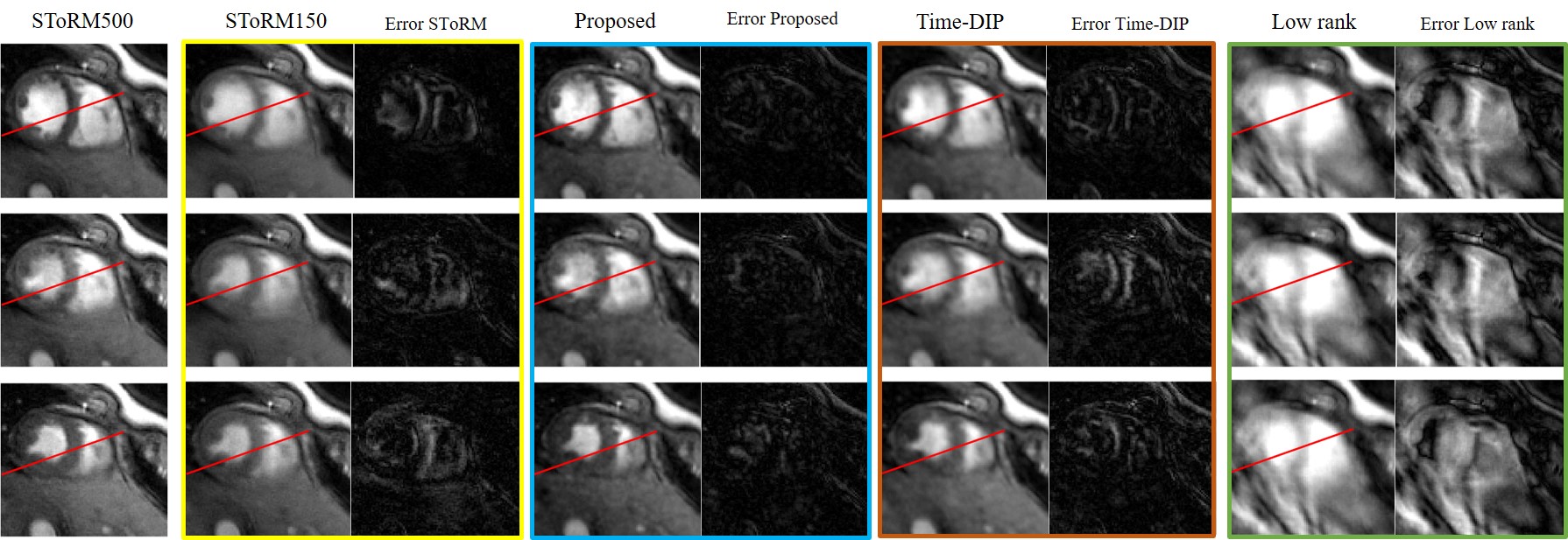}}\quad
	\subfigure[Time profiles]{\includegraphics[width=0.22\textwidth]{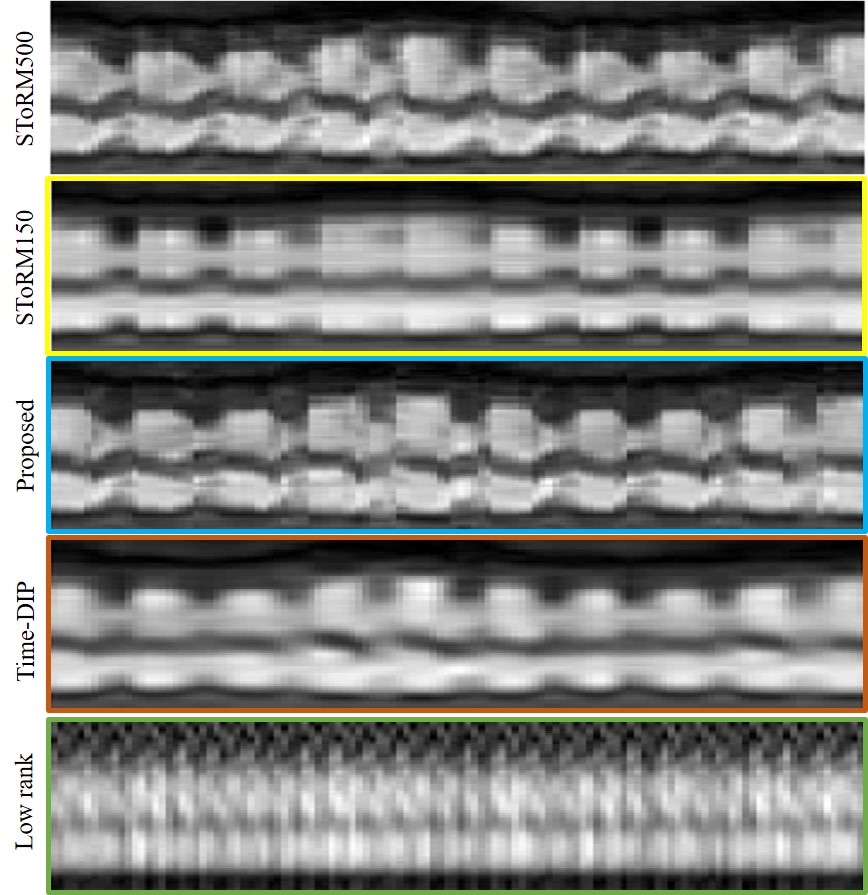}}
	\caption{Comparisons with the state-of-the-art methods. The first column of (a) corresponds to the reconstructions from 500 frames ($\sim$ 25s of acquisition time), while the rest of the columns are recovered from 150 frames ($\sim$ 7.5s of acquisition time). The top row of (a) corresponds to the diastole phase, while the third row is the diastole phase. The second row of (a) is an intermediate one. Fig. (b) corresponds to the time profiles of the reconstructions. We chose $d=40$ for the proposed scheme. We observe that the proposed reconstructions appear less blurred when compared to the  conventional schemes.}
	\label{comp3}
\end{figure*}

\begin{figure*}[!h]
	\centering
	\subfigure[Latent vectors]{\includegraphics[width=0.2\textwidth]{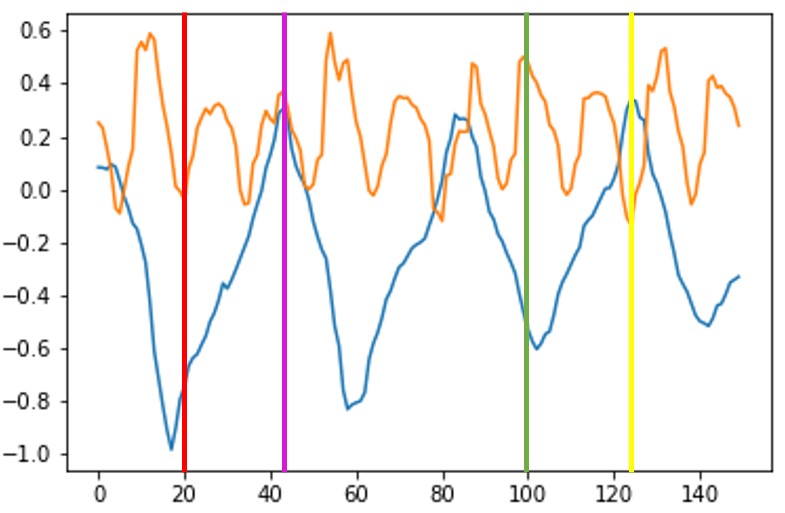}}\hspace{1em}
	\subfigure[Systole in E-E]{\includegraphics[width=0.13\textwidth]{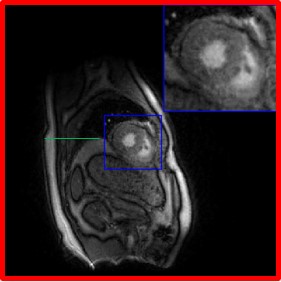}}\hspace{1em}
	\subfigure[Systole in E-I]{\includegraphics[width=0.13\textwidth]{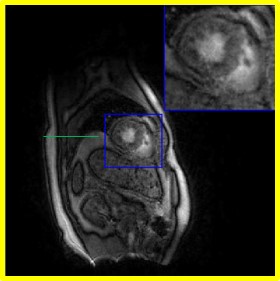}}\hspace{1em}
	\subfigure[Diastole in E-E]{\includegraphics[width=0.13\textwidth]{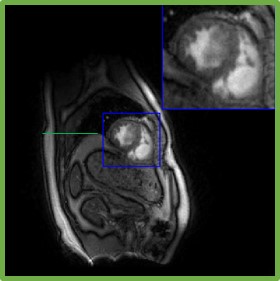}}\hspace{1em}
	\subfigure[Diastole in E-I]{\includegraphics[width=0.13\textwidth]{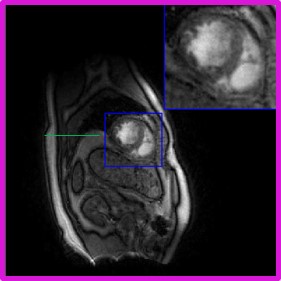}}
	\caption{Illustration of the framework of the proposed scheme with $d=40$. We plot the latent variables of 150 frames in a time series on the first dataset. We showed four different phases in the time series: systole in End-Expiration (E-E), systole in End-Inspiration (E-I), diastole in End-Expiration (E-E), and diastole in End-Inspiration (E-I). A thin green line surrounds the liver in the image frame to indicate the respiratory phase. The latent vectors corresponding to the four different phases are indicated in the plot of the latent vectors.}
	\label{showcase1}
\end{figure*}

\begin{figure*}[!h]
	\centering
	\subfigure[Latent vectors]{\includegraphics[width=0.2\textwidth]{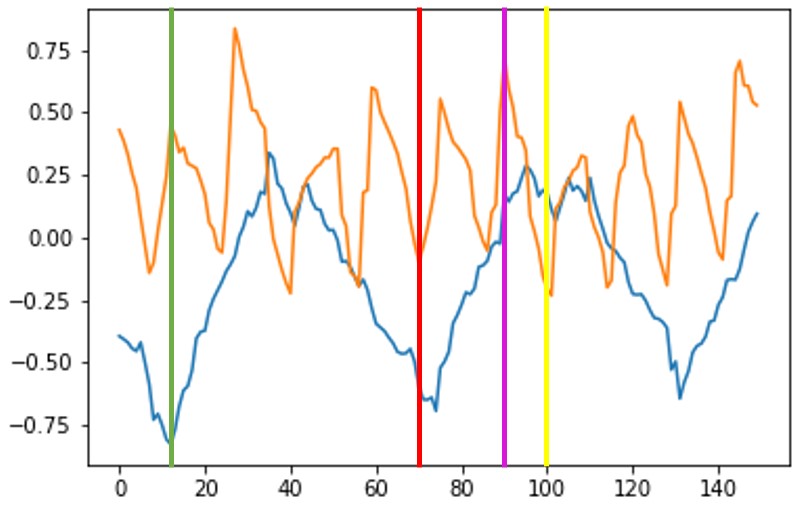}}\hspace{1em}
	\subfigure[Systole in E-E]{\includegraphics[width=0.13\textwidth]{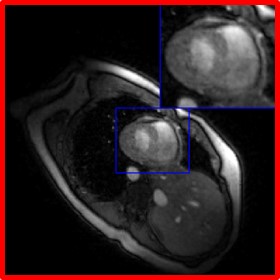}}\hspace{1em}
	\subfigure[Systole in E-I]{\includegraphics[width=0.13\textwidth]{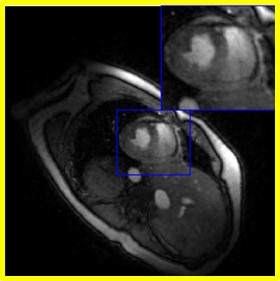}}\hspace{1em}
	\subfigure[Diastole in E-E]{\includegraphics[width=0.13\textwidth]{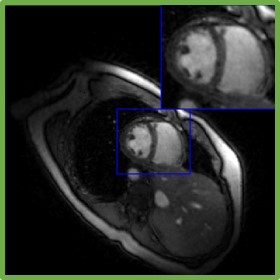}}\hspace{1em}
	\subfigure[Diastole in E-I]{\includegraphics[width=0.13\textwidth]{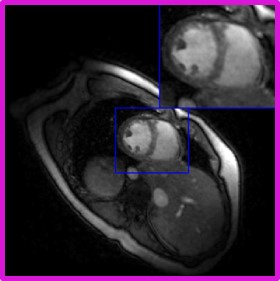}}
	\caption{Illustration of the framework of the proposed scheme with $d=40$. We plot the latent variables of 150 frames in a time series. We showed four different phases in the time series: systole in End-Expiration (E-E), systole in End-Inspiration (E-I), diastole in End-Expiration (E-E), and diastole in End-Inspiration (E-I). The latent vectors corresponding to the four different phases are indicated in the plot of the latent vectors.}
	\label{showcase2}
\end{figure*}

We also compared the proposed scheme with SToRM500, SToRM150, and the unsupervised Time-DIP approach quantitatively. We omit the low-rank method here because low-rank approach often failed in some datasets. The quantitative comparisons are shown in Table \ref{quan_comp3}. We used SToRM500 as the reference for SER, PSNR, and SSIM calculations. The quantitative results are based on the average performance from six datasets.

Finally, we illustrate the proposed approaches in Fig. \ref{showcase1} and Fig. \ref{showcase2}, respectively. The proposed approach decoupled the latent vectors corresponding to the cardiac and respiratory phases well, as shown in the representative examples in Fig. \ref{showcase1} (a) and  Fig. \ref{showcase2} (a).


\section{Conclusion}

In this work, we introduced an unsupervised generative SToRM framework for the recovery of free-breathing cardiac images from spiral acquisitions. This work assumes that the images are generated by a non-linear CNN-based generator $\mathcal{G}_{\theta}$, which maps the low-dimensional latent variables to high-resolution images. Unlike traditional supervised CNN methods, the proposed approach does not require any training data. The parameters of the generator and the latent variables are directly estimated from the undersampled data. The key benefit for this generative model is its ability to compress the data, which results in a memory-effective algorithm. To improve the performance, we introduced a network/distance regularization and a latent variable regularization. The combination of the priors ensures the learning of representations that preserve distances and ensure the temporal smoothness of the recovered images; the regularized approach provides improved reconstructions while minimizing the need for early stopping. To reduce the computational complexity, we introduced a fast approximation of the data loss term as well as a progressive training-in-time strategy. These approximations result in an algorithm with computational complexity comparable to our prior SToRM algorithm. The main benefits of this scheme are the improved performance and considerably reduced memory demand. While our main focus in this work was to establish the benefits of this work in 2D, we plan to extend this work to 3D applications in the future.

\section*{Acknowledgement}

The authors would like to thank Ms. Melanie Laverman from the University of Iowa for making editorial corrections to refine this paper.

\bibliographystyle{IEEEbib}
\bibliography{refs}

\end{document}